\def\one{\mbox{1 \kern-.59em {\rm l}}}
\def\R{\mathbb{R}}
\def\C{\mathbb{C}}
\def\cM{{\cal M}}
\def\cC{{\cal C}}
\def\nn{\nonumber}
\def\bea{\begin{eqnarray}}
\def\eea{\end{eqnarray}}
\def\be{\begin{equation}}
\def\ee{\end{equation}}
\newtheorem{thm}{Theorem}[section]
\newtheorem{lem}[thm]{Lemma}
\newcommand{\eq}[1]{(\ref{#1})}
\def\Tr{{\rm Tr}}
\newcommand{\del}{\partial}
\def\a{\alpha}          
\def\b{\beta}         
\def\g{\gamma}  
\def\d{\delta}
\def\L{\Lambda} 
\def\r{\rho}
 \def\s{\sigma}  
\def\t{\tau}
\newcommand{\End}{\rm{End}}
\def\hs{\mathfrak{hs}}
\def\NC{{\rm NC}}
\def\cA{{\cal A}} \def\cB{{\cal B}} \def\cC{{\cal C}}
\def\cD{{\cal D}}  
 \def\cH{{\cal H}} 
 \def\cK{{\cal K}} 
\def\cM{{\cal M}}  
  \def\cR{{\cal R}}
 \def\cT{{\cal T}} 
 \def\cZ{{\cal Z}}
\def\mso{\mathfrak{so}}
\def\del{\partial}
\title{Classical space-time geometry in the IKKT matrix model}
\author*{Harold C. Steinacker}
\affiliation{Department of Physics, University of Vienna,\\
  Boltzmanngasse 5, A-1090 Vienna, Austria}
\emailAdd{harold.steinacker@univie.ac.at}
\abstract{We discuss the reconstruction of generic 3+1-dimensional space-time geometries
from covariant quantum spaces as backgrounds in the IKKT matrix model.
An explicit recipe to realize generic classical  geometries is provided.
Even though this typically entails some  higher-spin contributions,
these do not significantly modify the physical content of the model in the weak gravity regime.
This justifies the framework for emergent gravity given by the
semi-classical matrix model, supplemented 
by an induced Einstein-Hilbert action which arises in the presence of fuzzy extra dimensions. 
}
\begin{document}
\maketitle

\section{Introduction}

The purpose of these notes is two-fold. The first and main purpose is 
to provide a  justification for the geometric framework which is
underlying the higher-spin  gravity and gauge theory in the IKKT matrix model,
as described in a series of recent papers
\cite{Sperling:2019xar,Steinacker:2020xph,Fredenhagen:2021bnw,Asano:2021phy,Steinacker:2021yxt}. We will show  that
generic 3+1-dimensional space-time geometries can indeed be realized
as backgrounds within the IKKT matrix model, whose structure is that 
of covariant quantum spaces. This means that there is no explicit
Poisson tensor or $B$ field on space-time which would manifestly
break Lorentz invariance.

The second purpose of these notes is to summarize and discuss   
some further implications of emergent gravity in this framework, in particular the recent 1-loop computation
leading to the Einstein-Hilbert action \cite{Steinacker:2021yxt}. The underlying framework is 
now fully justified by the present  reconstruction of generic geometries.

The main result of the paper is a
recipe how to realize or reconstruct generic background geometries (with trivial topology) in the matrix model, starting from some metric $G_{\mu\nu}$
on space-time. Even though this was assumed in the above works, no full
justification has been given, and the statement is  in fact rather
 subtle. Here we show that generic classical 
geometries can be reconstructed via suitable matrix model backgrounds, 
provided we 
restrict ourselves to the weak gravity regime.
This means that the gravitational curvature length scale should be 
much larger than any other physical scale. Under this assumption, 
the reconstructed geometries are well approximated by their classical counterparts,
and can be described locally in terms of linearized perturbations of flat geometry.

The matrix models under consideration have an extremely simple structure, given by 
\begin{align}
S_{YM} = {\rm Tr} [T^{\dot a},T^{\dot b}][T^{{\dot a}'},T^{{\dot b}'}] \eta_{{\dot a}{\dot a}'} \eta_{{\dot b}{\dot b}'} \,\, + {\rm fermions} .
\label{MM-action}
\end{align}
Here $T^{\dot a},\, {\dot a}= 0,...,D-1$ are a set of hermitian matrices which transform under
a global $SO(D-1,1)$ symmetry acting on the
dotted Latin indices, and $\eta_{{\dot a}{\dot b}}$
can be interpreted as $SO(D-1,1)$- invariant metric on target space $\R^{D-1,1}$.
The models are invariant under gauge transformations
\begin{align}
 T^{\dot a} \to U^{-1} T^{\dot a} U \ .
\end{align}
It is straightforward to include fermions, which is very important for the 
quantization; in fact we will require 
maximal supersymmetry, as realized in the IKKT model \cite{Ishibashi:1996xs} with $D=10$.
There is no a priori notion of  space-time or
differential geometry;
all  geometrical structures relevant for the fluctuations on some given background solution
emerge dynamically within the model.
 We will show how generic 3+1-dimensional space-time geometries as required for
gravity  can be realized
as deformations of the covariant cosmic background  $\bar\cM^{3,1}$
introduced in \cite{Sperling:2019xar}.


A general framework which allows to make geometric sense of the matrix model is that
of quantized symplectic spaces. We consider any given set of matrices $T^{\dot a}$ as a
{\bf matrix configuration}. Since the action is given by the square of commutators, only ``almost-commutative'' matrix configurations are expected to play a significant role at low energies, i.e. matrices
whose commutators are much smaller in some sense than the matrices $T^{\dot a}$.
One can then argue on rather general grounds \cite{Steinacker:2020nva,Ishiki:2015saa}
that such matrix configurations can be interpreted in terms of a
quantized symplectic space $(\cM,\omega)$, where the algebra of functions $\cC(\cM)$
is replaced by the operator algebra $\End(\cH)$. More precisely, this is expected to hold
for some subspace of IR functions and almost-local operators; more details can be found in \cite{Steinacker:2020nva}. Such functions
\begin{align}
 \Phi\in\End(\cH) \sim \phi \in \cC(\cM)
\end{align}
can be identified with their classical
counterpart via some (de-) quantization map defined via quasi-coherent states.
We will work mostly in the semi-classical regime indicated by $\sim$, where commutators can be replaced by
Poisson brackets
\begin{align}
 [\Phi,\Psi] \sim i \{\phi,\psi\}
\end{align}
as familiar from quantum mechanics.
In particular, the $T^{\dot a}$ can accordingly
be viewed as quantized functions on $\cM$, which thereby define an embedding of $\cM$
into target space:
\begin{align}
 T^{\dot a} \sim t^{\dot a}: \quad \cM \hookrightarrow \R^{9,1} \ .
\end{align}
This suggests to interpret $\cM$  as a brane, very much like in string theory.
However from the point of view of the physics on $\cM$, the $T^{\dot a}$ and their commutators
\begin{align}
 \Theta^{\dot a\dot b} := i[T^{\dot a},T^{\dot b}] \ \sim - \{T^{\dot a},T^{\dot b}\}
\end{align}
play also another role, and can be related to geometric i.e. tensorial objects on $\cM$.

The key to understand $T^{\dot a}$ and $ \Theta^{\dot a\dot b}$ is to observe that they generate
  {\em Hamiltonian vector fields}  on $\cM$:
 \begin{align}
 E^{\dot a} [\phi] &:= \{T^{\dot a},\phi\}  \label{frame-general-0}\\
 \tensor{\cT}{^{\dot a}^{\dot b}} [\phi] &:= \{\Theta^{\dot a\dot b},\phi\}
 \label{torsion-general-0}
\end{align} 
acting on some test-function $\phi\in\cC(\cM)$.
These vector fields can be made more explicit by
introducing local coordinates $y^\mu$ on the $n$-dimensional manifold $\cM$.
Define 
 \begin{align}
 \tensor{E}{^{\dot a}^\mu} &:= \{T^{\dot a},y^\mu\}  \ ,
 \label{frame-general}\\
 \tensor{\cT}{^{\dot a}^{\dot b}^\mu} &:= \{\Theta^{\dot a\dot b},y^\mu\} \ ;
 \label{torsion-general}
\end{align} 
their significance will be clarified shortly.
We must carefully distinguish the different types of indices:
Greek indices $\mu,\nu =1,...,n$ will denote local coordinate indices on $\cM$,
which  play the role of
tensor indices. Dotted Latin indices $\dot a,\dot b =0,...,9$
 indicate frame-like indices which
are unaffected by a change of coordinates $y^\mu$, but transform under the global
$SO(1,9)$ symmetry  of the matrix model.
These frame-like indices will be raised and lowered with $\eta_{\dot a \dot b}$.
In particular, the $\tensor{E}{^{\dot a}^\mu}$ define vector fields
\begin{align}
 \tensor{E}{^{\dot a}} = \tensor{E}{^{\dot a}^\mu}\del_\mu
\end{align}
on $\cM$,
which  play a role of a (generalized)
{\em frame} on $\cM$.
This will  allow to understand the effective geometry and the gauge theory which arises on
$\cM$ through the matrix model.
In particular,  we can recognize
the infinitesimal gauge transformations in the matrix model
\begin{align}
 \d_\L T^{\dot a} = [T^{\dot a},\L] \sim i  \{T^{\dot a},\L\} = i \tensor{E}{^{\dot a}^\mu} \del_\mu\L
 \label{gaugetrafo-infinites}
\end{align}
 as generators of a sub-sector of diffeomorphisms on $\cM$, namely of the symplectomorphisms.
Finally, the tensor $\tensor{\cT}{^{\dot a}^{\dot b}^\mu}$ can be recognized as torsion of
the Weitzenb\"ock connection associated to the frame $\tensor{E}{^{\dot a}}$,
which is very useful to describe the non-linear regime of the
matrix model in the semi-classical regime \cite{Steinacker:2020xph,Langmann:2001yr}.

\paragraph{Covariant quantum space-time.}

In the following we will  focus on branes $\cM$ which are embedded in target space along the
 $\dot a,\dot b =0,...,3$ directions. Then the extra dotted indices will mostly be ignored, but they
 play a role once  fuzzy extra dimensions are included.
However, this assumption does not mean that $\cM$ is a 4-dimensional manifold;
if $\cM$ is 4-dimensional, then the Poisson tensor $\theta^{\mu\nu}$ on $\cM$  plays the role of some
background tensor on space-time, which is problematic since it breaks Lorentz invariance.
To avoid this we will consider a different class of {\bf covariant quantum spaces},
which have the structure of a $S^2$ bundle over space or space-time
\begin{align}
 \cM \cong S^2 \times \cM^{3,1} \qquad \mbox{locally} \ .
\end{align}
The prototype $\bar \cM$ of such a structure \cite{Sperling:2019xar}
is obtained as a certain projection of the fuzzy hyperboloid $H^4_n$ \cite{Sperling:2018xrm,Hasebe:2012mz},
and gives rise to a quantum space-time $\bar\cM^{3,1}_n$
with FLRW geometry and  Minkowski signature.
For other examples and approaches to covariant quantum spaces\footnote{The framework of \cite{Hanada:2005vr} is also somewhat similar to ours, but the bundles under consideration there are vastly bigger.}
see e.g. \cite{Ramgoolam:2001zx,Abe:2002in,Medina:2002pc,Grosse:2010tm,Hasebe:2012mz,Heckman:2014xha,Buric:2017yes,Manolakos:2019fle}.

Let us describe the structure of the covariant quantum space-time
$\bar \cM$  in some detail.
In the semi-classical limit $n\to\infty$, $\bar \cM$
reduces to an $SO(3,1)$- equivariant $S^2$ bundle over
$ \bar\cM^{3,1}$. 
The functions on the 6-dimensional $\bar\cM$ are
generated by  generators $x^\mu$ which describe $\bar\cM^{3,1}$, and
 $t_\mu$ which generate the internal sphere $S^2$.
 Both sets of generators transform covariantly under $SO(3,1)$, and satisfy the constraints
\begin{align}
 x_\mu x^\mu &= -R^2 - x_4^2 = -R^2 \cosh^2(\eta) \, ,
 \qquad R \sim \frac{r}{2}n   \label{radial-constraint}\\
 t_{\mu} t^{\mu}  &=  r^{-2}\, \cosh^2(\eta) \, \label{tt-constraint}\\
 t_\mu x^\mu &= 0 \ 
 \label{x-t-orth}
\end{align}
where indices are contracted with $\eta^{\mu\nu}$. Here
 $\eta \in (-\infty,\infty)$ plays the role of a FLRW time parameter,
featuring a big bounce at $\eta=0$.
The space of functions
decomposes into a direct sum $\End(\cH_n) = \oplus\, \cC^s$ of higher spin ($\hs$) modes on
$\cM^{3,1}$, which in the
semi-classical regime can be organized in terms of totally symmetric traceless tensors
\begin{align}
 \phi^{(s)} &= \phi_{\mu_1 ... \mu_s}(x) t^{\mu_1} ... t^{\mu_s} \nn\\
 \qquad \phi_{\mu_1 ... \mu_s} x^{\mu_i} &= 0\ = \phi_{\mu_1 ... \mu_s} \eta^{\mu_i \mu_j} .
 \label{Cs-explicit}
\end{align}
 $\bar\cM$ is
a symplectic manifold (which is quantized in the matrix model),
and the Poisson tensor $\theta^{\mu\nu} = \{x^\mu,x^\nu\}$
vanishes
upon projection to space-time $\cM^{3,1}$. This projection or averaging over $S^2$
will be denoted by $[.]_0$:
\begin{align}
 [\theta^{\mu\nu}]_0 \equiv \int\limits_{S^2} \theta^{\mu\nu} = 0 \ .
\end{align}
The more generic covariant quantum spaces under consideration here are by definition
the {\em same symplectic bundle} $\cM \cong \bar\cM$ , realized as a background of the model
through a different, perturbed
embedding map $T^{\dot a} \sim t^{\dot  a}$.
More explicitly,
\begin{align}
 T^{\dot a} = \bar T^{\dot a} + \cA^{\dot a}    \sim t^{\dot a} + \cA^{\dot a} \
\end{align}
where $\cA^{\dot a}$ are functions on $\cM$ or equivalently $\hs$-valued functions on $\cM^{3,1}$
which can be expanded in the form \eq{Cs-explicit}.
In particular, all these backgrounds are equivalent as symplectic spaces, and we will always
use the standard coordinate functions $x^\mu$ and $t^\mu$ as for the undeformed background $\bar\cM$,
with the same the symplectic form or Poisson structure.
This is very natural since symplectic manifolds are rigid, so that any deformation is
 equivalent (locally, at least) to the undeformed space by some diffeomorphism.

The purpose of this short paper is to clarify if and under what conditions
the  higher-spin gauge theory on $\cM^{3,1}$ can be reduced to (or is dominated by) the classical geometry
i.e. the lowest spin sector on $\cM^{3,1}$, which is supposed to play the role of physical space-time.
More explicitly, we want to understand if it is consistent to restrict to
fluctuations of the form
\begin{align}
 \cA^{\dot a} = \cA^{\dot a \mu}(x)\, t_\mu \ ,
\end{align}
dropping or neglecting higher-spin $\hs$ contributions $\cA^{\dot a \mu_1...\mu_s}\, t_{\mu_1} ... t_{\mu_s}$.
We will indeed establish that backgrounds of the structure
\begin{align}
 T^{\dot a} = T^{\dot a \mu}(x)\, t_\mu
\end{align}
are sufficiently rich to describe generic 3+1-dimensional space-time geometries, and
provide a self-consistent class of configurations in the matrix model where higher-spin corrections are
negligible in the {\bf weak gravity regime}, to be discussed below.

\section{Effective metric and frame on covariant quantum space-time}
\label{sec:eff-frame}

Now we establish the interpretation of $\tensor{E}{^{\dot a}^\mu}$ as frame on $\cM^{3,1}$.
As in any field theory, the effective metric governing some field or 
fluctuation mode is encoded in the kinetic term of the action.
Consider a
matrix background corresponding to some $2n$-dimensional brane $\cM\hookrightarrow \R^{3,1}\subset \R^{9,1}$.
Then the  kinetic (=quadratic) term for transversal fluctuations\footnote{The case of tangential fluctuations can be analyzed similar and leads to the same metric.} 
in Yang-Mills matrix models has the structure
\begin{align}
 S[\phi] =  \Tr([T^{\dot a},\phi][T_{\dot a},\phi]) 
  &\sim -\frac{1}{(2\pi)^n} \int\limits_\cM\Omega\;\{T^{\dot a},\phi\}\{T_{\dot a},\phi\} \nn\\
  &= - \frac{1}{(2\pi)^n} \int\limits_\cM\Omega\; \eta_{{{\dot a}}{\dot b}} \tensor{E}{^{\dot a}^\mu} \tensor{E}{^{\dot b}^\nu}\del_\mu\phi\del_\nu\phi  \nn\\
  &= -\frac{1}{(2\pi)^n} \int\limits_\cM\Omega\; \g^{\mu\nu}\del_\mu\phi\del_\nu\phi
  \label{S-kinetic-1}
 \end{align}
in the semi-classical regime, recognizing \eq{frame-general}.
Here $\Omega$ is the symplectic volume form on $\cM$, and
\begin{align}
 \g^{\mu\nu} := \eta_{{{\dot a}}{\dot b}} \tensor{E}{^{\dot a}^\mu} \tensor{E}{^{\dot b}^\nu} \ .
 \label{metric-gamma}
\end{align}
This is clearly the metric determined by the frame $\tensor{E}{^{\dot a}^\mu}$;
however the effective metric acquires an extra conformal factor, which arises as follows.
In the case of covariant quantum spaces under consideration, we can assume that
$\cM = \bar\cM = S^2 \times \cM^{3,1}$, with a global $SO(3)$ symmetry acting on $S^2$ and $\cM^{3,1}$
simultaneously. Then
 $\Omega$ factorizes into the volume of the $S^2$ fiber
times the effective density $\r_M$ on space-time $\cM^{3,1}$ \cite{Sperling:2019xar}:
\begin{align}
 \Omega = \r_M  d^4 x\, \Omega_2 \ , \qquad \r_M = \frac 1{r^2 R^2\sinh(\eta)} 
  \sim L_{\rm NC}^{-4} \ .
  \label{sympl-density}
\end{align}
Here $S^2$ is normalized with volume $4\pi$, and $x^\mu$ are the Cartesian coordinates  \eq{radial-constraint} on $\cM^{3,1}$ or\footnote{Recall that $\cM = \bar\cM$ as a manifold, only the embedding and the frame are deformed.} $\bar \cM^{3,1}$.
$L_{\rm NC}$ characterizes the scale of noncommutativity.
Then \eq{S-kinetic-1}  can be written in a more familiar form
 \begin{align} 
   S[\phi]
  &\sim -\frac{1}{2\pi^2} \int\limits_\cM\r_M\; \g^{\mu\nu}\del_\mu\phi\del_\nu\phi 
  = -\frac{1}{2\pi^2}\int\limits_{\cM^{3,1}} d^{4}x\, \sqrt{|G_{\mu\nu}|} G^{\mu\nu}\del_\mu\phi\del_\nu\phi \ .
\end{align}
We can now read off the  {\bf effective metric} on $\cM^{3,1}$:
\begin{align}
 G^{\mu\nu} &= \r^{-2}\, \g^{\mu\nu} \ 
  \label{eff-metric-def}
\end{align}
where $\r$ is the {\bf dilaton}, which relates the symplectic density $\r_M$
to the Riemannian density via
\begin{align}
  \r^{-2}\sqrt{|G_{\mu\nu}|} = \r_M \  = \r^{2} \sqrt{|\g_{\mu\nu}|}
  \label{rho-metric-general-cov}
\end{align}
using $\sqrt{|G_{\mu\nu}|} =\r^{4} \sqrt{|\g_{\mu\nu}|}$.
From the string theory point of view, the metric  $G_{\mu\nu}$ can be interpreted as open-string metric on $\cM^{3,1}$.
Noting that
\begin{align}
 \sqrt{|\g^{\mu\nu}|} = |\det\tensor{E}{^{\dot a}^\mu}| \ ,
\end{align}
 the dilaton is determined by the frame as
\begin{align}
 \r^{2}  =  \r_M\, |\det\tensor{E}{^{\dot a}^\mu}| \ .
 \label{dilaton-eq-3}
\end{align}
It is important that the frame $\tensor{E}{^{\dot a}^\mu}$
in the present context does {\em not} admit local $SO(3,1)$ gauge transformations
acting on $\dot a$, only global $SO(3,1)$ transformations are allowed.
The frame is a physical object here which is subject to certain constraints \eq{div-free-frame},
and determines not only the metric but also  additional physical information,
such as the dilaton $\r$ and also an axion $\tilde\r$ \eq{T-del-onshell}.

\subsection{Cosmological FLRW solution}
\label{sec:cosm-background}

A special case of the above class of backgrounds is given by
 \begin{align}
 T^\mu = \frac{1}{R} M^{\mu 4} \ \sim t^\mu
 \label{T-background-def}
\end{align}
where $M^{ab}$ are  generators of the doubleton representation $\cH_n$
of $\mso(4,1) \subset \mso(4,2)$.
It is easy to see that $T^\mu$ is a solution of the matrix model in the presence of a suitable mass term; we shall simply discuss some of its properties here.
 $T^\mu$ defines a matrix configuration with manifest $SO(3,1)$ symmetry, which
in the semi-classical regime
reduces to a 6-dimensional background $\bar\cM$ which is an $S^2$ bundle over $\bar\cM^{3,1}$.
The Cartesian coordinate functions on the base manifold $\bar\cM^{3,1}$ arise as
 \begin{align}
 X^\mu = r\, M^{\mu 5} \ \sim x^\mu \ .
 \label{X-background-def}
\end{align}
We will focus on the semi-classical (Poisson)
limit $n \to \infty$, working with  commutative functions of $x^\mu$ and $t^\mu$,
but keeping the Poisson structure  $[.,.] \sim i \{.,.\}$.
Then $\End(\cH_n) \sim \cC$ reduces to the
algebra of functions on the bundle space $\cM \cong \C P^{2,1}$, dropping the bar for now.
The sub-algebra $\cC^0 \subset \cC$ of functions on the base space $\cM^{3,1}$ is  generated by the
\begin{align}
 x^\mu:\, \cM^{3,1}\hookrightarrow \R^{3,1}
\end{align}
for $\mu=0,...,3$, which are interpreted as Cartesian
coordinate functions.
The generators $x^\mu$ and $t^\mu$ satisfy
the  constraints  \eq{x-t-orth},
which arise from the special properties of $\cH_n$.
The $t^\mu$ generators  describe the $S^2$ fiber over $\cM^{3,1}$, which is
 space-like due to \eq{x-t-orth}. Here
$\eta$ plays the role of a time parameter, defined via
\begin{align}
 x^4 = R \sinh(\eta) \ .
 \label{x4-eta-def}
\end{align}
Hence $\eta = const$ defines a foliation of $\cM^{3,1}$ into space-like surfaces $H^3$; this 
can be related to the scale parameter of a FLRW cosmology with $k=-1$.
Note that $\eta$ runs from $-\infty$ to $\infty$, and
the sign of $\eta$ distinguishes the two degenerate sheets of $\cM^{3,1}$ linked by a Big Bounce,
cf. \cite{Steinacker:2019fcb}.
The Poisson brackets on $\bar\cM$ are given explicitly by
\begin{align}
 \{x^\mu,x^\nu\} &= \theta^{\mu\nu}  = - r^2 R^2 \{t^\mu,t^\nu\} \ ,  \nn\\
 \{t^\mu,x^\nu\} &= \frac{x^4}{R} \eta^{\mu\nu} \ ,
 \label{Poisson-brackets-M31} \
\end{align}
where the Poisson tensor $\theta^{\mu\nu}$ satisfies the constraints
\begin{subequations}
\label{geometry-H-theta}
\begin{align}
 t_\mu \theta^{\mu\a} &= - \sinh(\eta) x^\a , \\
 x_\mu \theta^{\mu\a} &= - r^2 R^2 \sinh(\eta) t^\a , \label{x-theta-contract}\\
 \eta_{\mu\nu}\theta^{\mu\a} \theta^{\nu\b} &= R^2 r^2 \eta^{\a\b} - R^2 r^4 
t^\a t^\b + r^2 x^\a x^\b  \ .
%
\end{align}
\end{subequations}
$\theta^{\mu\nu}$ can be expressed in terms of $t^\mu$ as
\begin{align}
 \theta^{\mu\nu} &= \frac{r^2}{\cosh^2(\eta)} 
   \Big(\sinh(\eta) (x^\mu t^\nu - x^\nu t^\mu) +  \epsilon^{\mu\nu\a\b} x_\a t_\b \Big)
 \label{theta-P-relation} \ ,
\end{align}
and can therefore be viewed as spin $1$ valued ``function'' on $\cM^{3,1}$.
More generally, the space of functions $\cC$ on $\cM$ decomposes into a tower of higher-spin ($\hs$)
valued functions
\begin{align}
 \cC = \bigoplus_{s\geq 0} \cC^s
\end{align}
on  $\cM^{3,1}$, where  $\cC^s$ is spanned  by irreducible polynomials \eq{Cs-explicit} of degree $s$ in $t^\mu$.
The Poisson brackets do not respect the decomposition into $\cC^s$, but the following
holds
\begin{align}
\{\cC^s, x^\mu\} \in \cC^{s+1} \oplus  \cC^{s-1} \ 
\label{poisson-x-decomp}
\end{align}
noting that $\theta^{\mu\nu} \in\cC^1$.

\paragraph{Frame,  metric and torsion on $\bar\cM^{3,1}$.}

Following the general strategy discussed above,
we can extract the effective metric on $\bar\cM^{3,1}$.
Frame and metric are obtained in Cartesian coordinates from \eq{Poisson-brackets-M31} as
\begin{align}
 E^{\dot a}  &= \{t^{\dot a},\cdot\} = E^{\dot a\mu} \del_\mu ,
  \qquad  E^{\dot a\mu} = \eta^{\dot a\mu} \sinh(\eta)\ , \nn\\[1ex]
 \g^{\mu\nu} &= \eta_{\dot a\dot b}E^{\dot a\mu}E^{\dot b\nu}
  = \sinh^2(\eta) \eta^{\mu\nu}   \ .
 \label{gamma-vielbein}
\end{align}
Recalling that $\r_M \sim \sinh(\eta)^{-1}$,
the effective metric on $\bar\cM^{3,1}$ and the dilaton are  obtained as
\begin{align}
   G_{\mu\nu} &= \sinh^{3}(\eta)  \g_{\mu\nu} \ = \sinh(\eta) \eta_{\mu\nu}  \ , \nn\\ 
   \rho^2 &= \sinh^{3}(\eta) \ .
   \label{eff-metric-Gbar}
\end{align}
This metric is $SO(3,1)$-invariant with signature $(-+++)$ and
conformal to the induced (``closed-string'') metric $\eta_{\mu\nu}$.
It can be written in standard FLRW form as follows \cite{Sperling:2019xar}
\begin{align}
 d s^2_G = G_{\mu\nu} d x^\mu d x^\nu
   &= -d t^2 + a^2(t)d\Sigma^2 \,
   \label{eff-metric-FRW-2}
\end{align}
where  $d\Sigma^2$ is the metric on $H^3$, and the FLRW time $t$ is related to the time parameter $\eta$ via
\begin{align}
 a(t) \sim R \sinh^{3/2}(\eta) =:  L_{\rm cosm}, \qquad t \to \infty \ .
 \label{L-cosm}
\end{align}
One finds $a(t) \sim \frac 32 t$ for late times, and $a(t) \sim t^{1/5}$ near the Big Bounce.
The torsion tensor \eq{torsion-general} is also easily computed
using $\Theta^{\dot a\dot b} = \frac{1}{R^2} \cM^{\dot a\dot b}$,
which gives
\begin{align}
 \tensor{\cT}{^{\dot a}^{\dot b}^\mu} &= \{\Theta^{\dot a\dot b},x^\mu\}
  =  \frac{1}{R^2}\big(\eta^{\dot a\mu}x^{\dot b} -    \eta^{\dot b\mu}x^{\dot a}\big) \
\end{align}
in Cartesian coordinates $x^\mu$. This can be recast as a rank 3 tensor
on $\cM^{3,1}$ using the frame $E^{\dot a\mu}$,
 \begin{align}
 \tensor{\cT}{_\nu_\s^\mu} &= \frac{1}{R^2 \r^2} \big(\d_\nu^{\mu} \t_\s - \d_\s^{\mu}\t_\nu\big) \
 \label{torsion-M31-frame}
\end{align}
where
\begin{align}
 \t_\mu = G_{\mu\nu}\t^\nu =  G_{\mu\nu} x^\nu = \sinh(\eta)\eta_{\mu\nu }x^\nu
  \label{torsion-M31-coords}
\end{align}
is a global time-like $SO(3,1)$-invariant vector field on the FLRW background.

\paragraph{Late-time regime and noncommutativity scale.}

Consider the regime of late time or large $\eta$, so that $\sinh(\eta) \gg 1$.
Then the Poisson tensor $\theta^{\mu\nu}$ \eq{theta-P-relation} 
reduces to
\begin{align}
 \theta^{\mu\nu}
\sim \frac{r^2}{\cosh(\eta)} (x^\mu t^\nu - x^\nu t^\mu) \ ,\qquad\eta\to\infty .
   \label{theta-approx}
\end{align}
More specifically, consider
 some given reference point $\xi = (x^0,0,0,0)$ on $\cM$. Then
this reduces to 
\begin{align}
  \theta^{0i} &\stackrel{\xi}{=}  \frac{r^2}{\cosh^2(\eta)} \sinh(\eta) x^0 t^i 
   \ \sim \ r^2 R t^i  \quad = O(L^2_{\rm NC})  \nn\\
  \theta^{ij} &\stackrel{\xi}{=} \frac{r^2}{\cosh^2(\eta)} x^0 \epsilon^{0ijk} t_k 
  \ \sim \ \frac 1{\sinh(\eta)} r^2 R \epsilon^{ijk} t^k \quad = O(r R) \ ,
 \label{CR-explicit-ref}
\end{align}
where 
\begin{align}
 L^2_{\rm NC} = R r \cosh(\eta) \
 \label{L-NC-M31}
\end{align}
is the effective scale of noncommutativity on $\cM^{3,1}$ (cf. \eq{sympl-density}),
using $|t| \sim r^{-1}\cosh(\eta)$ \eq{tt-constraint}.
Even though this grows with $\eta$, it is much shorter than the cosmic curvature 
scale \eq{L-cosm}:
\begin{align}
 \frac{L_{\rm cosm}^2}{L_{\NC}^2} \sim \frac{R}{r} \cosh^2(\eta) 
 \sim n \cosh^2(\eta)\ .
 \label{scales-hierarchy-cNC}
\end{align}
Therefore there is plenty of space for interesting physics in between.
In particular, $ \theta^{0i} \sim r^2 R t^i  \gg \theta^{ij} $  at late times $\eta \gg 1$.
The space-like generators $t^i$ 
describe the internal fuzzy sphere $S^2_n$ with
\begin{align}
 \{t^i,t^j\} &\stackrel{\xi}{=} -\frac{1}{r^2 R^2} \theta^{ij} = -\frac 1{R\sinh(\eta)} \epsilon^{ijk} t^k \
 \label{t-t-CR}
\end{align}
and generate the higher-spin algebra $\hs$.
Even though $t^0\stackrel{\xi}{=} 0$ vanishes as function at $\xi$, it
is  a non-trivial generator which
induces local time translations via $\{t^0,.\}$.

\subsection{Derivations}

\paragraph{Fuzzy hyperboloid $H^4_n$.}

The above space-time $\cM^{3,1}$ can be understood as a projection of the 
fuzzy hyperboloid $H^4_n$ \cite{Sperling:2018xrm}, which can be viewed as a submanifold of
$\R^{4,1}$ defined in terms of the 5 generators
 \begin{align}
 X^a = r\, M^{a 5} \ \sim x^a \ , \qquad a=0,...,4
 \label{X-background-def-H4}
\end{align}
(cf. \eq{X-background-def}) which transform as vectors of $SO(4,1)$.
The underlying symplectic space is the same as for $\cM^{3,1}$, given by 
the non-compact projective space $\C P^{2,1}$ which is nothing but 
(projective) twistor space, cf. \cite{Steinacker:2022jjv}.
The Poisson structure on the bundle space allows to define derivations 
as follows
\begin{align}
  \eth^a \phi \coloneqq -\frac{1}{r^2 R^2}\theta^{ab}\{x_b,\phi\}
  = \frac{1}{r^2 R^2}x_b \{\theta^{ab} ,\phi\}, \qquad
\phi\in\cC \; .
\label{eth-def}
\end{align}
They satisfy the useful identities
\begin{align}
x^a \eth_a \phi &= 0 \ , \nn\\  
\eth^a x^c &=  \eta^{ab} + \frac{1}{R^2}x^a x^b \ , \nn\\
 \eth^a (\{x_a,\phi\}) &= 0 \  \label{eth-bracket-id}
\end{align}
for any $\phi\in\cC$. Furthermore, we note that all (even $\hs$-valued) 
Hamiltonian vector fields
on $H^4_n$ are  tangential to $H^4\subset \R^{4,1}$, due to the identity
\begin{align}
 x^a\{x_a,\L\} = 0 \ .
\end{align}

\paragraph{Derivatives on $\cM^{3,1}$.}

Since the algebra of functions $\cC$ for $\cM^{3,1}$ and $H^4_n$ is the same,
we can use the above  derivative operators to define the following derivations on  $\cM^{3,1}$
\begin{align}
 \del_\mu &:=  \eth_\mu - x_\mu\frac 1{x_4} \eth_4 \qquad \mbox{on}\quad \cC  \ .
 \label{eth-M31-def}
\end{align}
Using the identities \eq{eth-bracket-id}, it is easy to show 
\begin{align}
 \del_\mu x^\nu &= \d_\mu^\nu \nn\\
 \del_\mu(\r_M\theta^{\mu\nu}) &= 0 \ .
 \label{del-theta-Poisson-M31}
\end{align}
This will imply that all Hamiltonian vector fields  on $\cM^{3,1}$, in particular the frame, are conserved.

\section{Divergence-free vector fields on $H^4$ and $\cM^{3,1}$}

Divergence-free vector fields will play an important role in the following.
Clearly
any vector field $V^a$  on $H^4$ can be mapped to a vector field $V^\mu$ on $\cM^{3,1}$, by
simply dropping the $V^4$ component (in Cartesian coordinates). 
This can be understood as  push-forward via a
projection \cite{Sperling:2019xar}. For example, a Hamiltonian vector field $V^a = \{T,x^a\}$
is mapped to $V^\mu = \{T,x^\mu\}$ in Cartesian coordinates.
Conversely, any vector field $V^\mu$ on $\cM^{3,1}$ can be lifted to $H^4$ by defining 
\begin{align}
V^4 := - \frac{1}{x_4} x_\mu V^\mu \ ,
\label{V4-def}
\end{align}
which  defines a tangential vector field $V^a x_a = 0$ on $H^4$.
We claim that this correspondence maps divergence-free vector fields $\eth_a V^a = 0$ on $H^4$
to divergence-free vector fields  on $\cM^{3,1}$, in the sense that
\begin{align}
 \del_\mu (\r_M V^\mu) = 0  \ .
\end{align}
Here $\r_M$ is the symplectic density \eq{sympl-density}  on $\cM^{3,1}$, which in Cartesian coordinates
is given by $\r_M = \sinh(\eta)^{-1}$.
In fact the following more general result holds:

\begin{lem}
 \label{lemma-div-M31}

 Let $V^a$ be a (tangential) vector field on $H^4$, i.e. $V^a x_a = 0$.
Then its reduction (or push-forward) $V^\mu$ to $\cM^{3,1}$ satisfies
 \begin{align}
  \eth_a V^a &= \sinh(\eta) \del_\mu(\r_M V^\mu)
   \label{div-H4-M31}
\end{align}
Conversely, the lift of $V^\mu$ to $H^4$ defined by \eq{V4-def}
satisfies \eq{div-H4-M31}.
If $V^a$ is  divergence-free on $H^4$ i.e. $\eth_a V^a = 0$, then
its reduction to $\cM^{3,1}$ satisfies
\begin{align}
 \del_\mu(\r_M V^\mu) =0 \ .
 \label{div-free-M31}
\end{align}
In particular, all  Hamiltonian vector fields on fuzzy $H^4$ and $\cM^{3,1}$ are conserved, in the sense
\begin{align}
 \eth_a \{x^a, T\} = 0, \qquad \del_\mu(\r_M \{x^\mu, T\})
 \label{conserv-Hamiltonian}
\end{align}

\end{lem}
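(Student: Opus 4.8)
The plan is to reduce everything to the central identity \eq{div-H4-M31}; once that is in hand, the conservation statement \eq{conserv-Hamiltonian} and the remaining claims follow with essentially no extra work. So the first and main task is to prove \eq{div-H4-M31} by direct computation.

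I would start by expanding both sides using the definition \eq{eth-M31-def}, $\del_\mu = \eth_\mu - x_\mu x_4^{-1}\eth_4$, and the Leibniz rule for $\eth_a$ and $\del_\mu$. On the left, $\eth_a V^a = \eth_\mu V^\mu + \eth_4 V^4$. On the right, $\sinh(\eta)\,\del_\mu(\r_M V^\mu) = \sinh(\eta)\big(\r_M\,\del_\mu V^\mu + V^\mu\,\del_\mu\r_M\big)$; here I would use $\r_M = \sinh(\eta)^{-1} = R/x_4$ from \eq{sympl-density} and \eq{x4-eta-def}, together with $\del_\mu x^4 = -x_\mu/x_4$ — which follows from $\del_\mu x^\nu = \d_\mu^\nu$ in \eq{del-theta-Poisson-M31} and the radial constraint \eq{radial-constraint} — to get $\del_\mu\r_M = Rx_\mu/x_4^3$. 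Feeding in the tangentiality relation $x_\mu V^\mu = -x_4 V^4$, the $\sinh(\eta)$ and $\r_M$ factors cancel, the $\eth_\mu V^\mu$ terms drop out between the two sides, and \eq{div-H4-M31} is seen to be equivalent to the single algebraic relation
\begin{align}
 x_a\,\eth_4 V^a + V^4 = 0 \ .
\end{align}

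To finish, I would verify this relation. By Leibniz and $x_a V^a = 0$ we have $x_a\,\eth_4 V^a = -(\eth_4 x_a)\,V^a$, and from the second identity of \eq{eth-bracket-id}, $\eth^a x^c = \eta^{ac}+R^{-2}x^a x^c$, lowering indices gives $\eth_4 x_a = \eta_{a4}+R^{-2}x_4 x_a$, hence $(\eth_4 x_a)V^a = V^4 + R^{-2}x_4\,(x_aV^a) = V^4$. This establishes the relation and therefore \eq{div-H4-M31}. The converse claim is the same computation run in reverse: once $V^4$ is \emph{defined} by \eq{V4-def}, tangentiality $x_aV^a=0$ holds identically and the same manipulations apply. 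Dividing \eq{div-H4-M31} by $\sinh(\eta)$ then gives \eq{div-free-M31} in the divergence-free case.

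For \eq{conserv-Hamiltonian} I would specialize to $V^a=\{x^a,T\}$. The first equation is literally the third identity of \eq{eth-bracket-id}. For the second, $V^a$ is tangential on $H^4$ by the identity $x^a\{x_a,\L\}=0$ noted above, and divergence-free by the first equation, so the divergence-free reduction \eq{div-free-M31} yields $\del_\mu(\r_M\{x^\mu,T\})=0$. I do not expect a genuine conceptual obstacle; the only delicate point is the bookkeeping in the first step — distinguishing $x_4$ from $x^4$, tracking the $SO(4,1)$ index positions in \eq{eth-bracket-id}, and keeping the powers of $\sinh(\eta)$ in the density factors straight — so that precisely the residual relation $x_a\eth_4 V^a+V^4=0$ survives after the cancellations.
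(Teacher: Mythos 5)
Your proof is correct and is essentially the same computation as the paper's: the paper starts from $\eth_a V^a$ and pushes through the identities until it reaches $\del_\mu V^\mu - x_4^{-1}V^4$, whereas you expand both sides and isolate the residual relation $x_a\eth_4 V^a + V^4 = 0$, but the underlying manipulations (the definition of $\del_\mu$, Leibniz, $\eth^a x^c = \eta^{ac} + R^{-2}x^ax^c$, and tangentiality) coincide. The handling of the converse and of \eq{conserv-Hamiltonian} also matches the paper.
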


\begin{proof}

Using the definition of $\del_\mu$ \eq{eth-M31-def} on $\cC$, we  compute
\begin{align}
 \eth_a V^a &=  \eth_\mu V^\mu  + \eth_4 V^4   \nn\\
  &= \big(\del_\mu + \frac 1{x_4} x_\mu\eth_4\big)  V^\mu
   + \eth_4 V^4    \nn\\
  &= \del_\mu V^\mu
  + \frac 1{x_4} \eth_4 (x_\mu V^\mu) - \frac 1{x_4} V^\mu\eth_4 x_\mu
  + \eth_4 V^4    \nn\\
  &= \del_\mu V^\mu
  - \frac 1{x_4} \eth_4 (x_4 V^4)
  -  \frac 1{R^2} x_\mu V^\mu
  + \eth_4 V^4    \nn\\
   &= \del_\mu V^\mu
   - \frac 1{x_4} V^4 \eth_4 x_4
  + \frac 1{R^2} x_4 V^4 \nn\\
   &= \del_\mu V^\mu - \frac 1{x_4} V^4  \nn\\
  &= \sinh(\eta) \del_\mu\big(\frac 1{\sinh(\eta)} V^\mu\big) \ .
\end{align}
\eq{conserv-Hamiltonian} now follows using  \eq{eth-bracket-id}.

\end{proof}
In particular, the identity
\eq{del-theta-Poisson-M31} can now be understood
by noting that $V^a = \{x^\nu,x^a\}$ is conserved on $H^4$.
We also note that
the divergence constraint \eq{div-free-M31} for vector fields on $\cM^{3,1}$ can be written using \eq{rho-metric-general-cov} in   covariant form in terms of the effective metric $G^{\mu\nu}$  on $\cM^{3,1}$:
\begin{align}
0 &= \nabla_\mu(\r^{-2} V^\mu )
  = \frac{1}{\sqrt{|G|}}\del_\mu\big(\r_M V^\mu\big) \
\end{align} 
where $\nabla$ is the Levi-Civita connection corresponding to $G$.

\section{Generic backgrounds from deformed $\cM^{3,1}$}

Starting from the above FLRW background, we can obtain more generic 
geometries as deformations, by simply adding fluctuations of the background:
\begin{align}
 T^a = \bar T^a + \cA^a 
\end{align}
The fluctuations $\cA^a$ are any $\hs$ valued gauge fields,
which are governed by a Yang-Mills gauge theory. 
We want to focus in the following on purely geometric deformations, leaving aside the
higher spin modes. We therefore focus on fluctuations of the form
\begin{align}
 \cA^a = \cA^{a\mu}(x) t_\mu 
\end{align}
Since we don't want to restrict ourselves to the linearized perturbations, we simply
consider generic backgrounds of the form 
\begin{align}
 T^a = T^{a\mu}(x) t_\mu 
 \label{general-BG}
\end{align}
which include the cosmic background for $T^{\dot a\mu}(x) = \eta^{\dot a\mu}$.
As discussed in section \ref{sec:eff-frame}, such a background defines a frame
\eq{general-BG} 
\begin{align}
  \tensor{E}{^{\dot a}^\mu} = \{T^{\dot a} ,x^\mu\}
  \label{frame-2}
\end{align}
Taking into account the above results, we conclude that any such frame
satisfies the divergence constraint \cite{Fredenhagen:2021bnw}
\begin{align}
\boxed{\
 \del_\mu\big(\r_M \tensor{E}{^{\dot a}^\mu}\big) = \nabla_\mu(\r^{-2} \tensor{E}{^{\dot a}^\mu}) \ .
\ }
\label{div-free-frame}
 \end{align}
In the following we will establish the converse statement: any frame given by 
divergence-free
vector fields can indeed be implemented as above, for a suitable background 
of the form \eq{general-BG}. Moreover, the $T^{\dot a}$  can be computed explicitly.
This entails in general some extra $\hs$ valued contribution
to the frame, which will be shown to be insignificant in section \ref{sec:weak-grav}
in the weak gravity regime.

\subsection{Reconstruction of divergence-free vector fields}
\label{sec:reconstruction-VF-M31}

We start by recalling two results given in \cite{Asano:2021phy}, starting with the Euclidean case:

\begin{lem}
 \label{lem:H4-reconstruct}
Given any divergence-free tangential vector field $\eth_a V^a = 0$ on $H^4$
with $V^a \in \cC^0$,
there is a unique generator $T\in\cC^1$ such that
\begin{align}
 V^a = \{T,x^a\}_0 \ .
 \label{VF-reconstruction-H4}
\end{align}
This $T$ is given explicitly by
\begin{align}
\boxed{\
 T :=   -3(\Box_H -4r^2)^{-1}\{V^a,x_a\} =: \cD^+(V)   \qquad \in \cC^1 \
 \ }
\end{align}
where $\Box_H = \{x^a,\{x_a,.\}\}$.

\end{lem}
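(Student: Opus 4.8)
The plan is to establish the existence, uniqueness, and explicit formula in turn, working entirely in the semi-classical (Poisson) picture on the bundle space $\cC$ and using the averaging $[.]_0$ over the $S^2$ fiber. First I would record the algebraic facts about the operator $\Box_H=\{x^a,\{x_a,\cdot\}\}$: it is $SO(4,1)$-invariant, hence preserves each $\cC^s$, and on $\cC^0$ and $\cC^1$ it acts as a scalar (a Casimir-type eigenvalue) plus lower-order pieces. The shift by $-4r^2$ in $(\Box_H-4r^2)^{-1}$ is precisely chosen so that this operator is invertible on the relevant subspace — I would check that $4r^2$ is not in the spectrum of $\Box_H$ on $\cC^1$ (or on the image of the map $V\mapsto\{V^a,x_a\}$), so that $\cD^+$ is well-defined. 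This is where most of the representation-theoretic bookkeeping lives.

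Next I would prove that $\cD^+(V)$ lands in $\cC^1$ and that it inverts the frame map in the averaged sense. The key computation is to start from $T\in\cC^1$ (write $T=T_b t^b$ with $x^b T_b=0$, using the orthogonality constraint), compute $\{T,x^a\}$ using the Poisson brackets \eq{Poisson-brackets-M31} and their constraints \eq{geometry-H-theta}, and then average over $S^2$ via $[.]_0$: the $\cC^1$-valued and $\cC^2$-valued pieces that are odd in $t$ drop out, leaving a $\cC^0$-valued vector field $V^a=\{T,x^a\}_0$. One then verifies $\eth_a V^a=0$ directly from the identities \eq{eth-bracket-id}, so the map $T\mapsto \{T,\cdot\}_0$ does send $\cC^1$ into divergence-free vector fields with $\cC^0$ components. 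The reverse direction is the content of the boxed formula: apply $\{\cdot,x_a\}$ to $V^a=\{T,x^a\}_0$, use the Jacobi identity together with $x^a\{x_a,T\}=0$ and the contraction identities for $\theta^{\mu\nu}$, and show that $\{V^a,x_a\}=-\tfrac13(\Box_H-4r^2)T$ up to terms that vanish on $\cC^1$ after averaging — so that $T=-3(\Box_H-4r^2)^{-1}\{V^a,x_a\}=\cD^+(V)$. Injectivity of $T\mapsto\{T,\cdot\}_0$ on $\cC^1$ then follows because $\cD^+$ is a one-sided inverse and $\Box_H-4r^2$ is invertible there, giving uniqueness.

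The main obstacle I anticipate is controlling the interplay between the two gradings — the $\cC^s$-decomposition in powers of $t^\mu$ and the order in $\theta^{\mu\nu}$ — when commuting $\Box_H$ past the frame map. Because $\{\cC^s,x^\mu\}\subset\cC^{s+1}\oplus\cC^{s-1}$ by \eq{poisson-x-decomp}, the composition $\{\{T,x^a\},x_a\}$ spreads $T\in\cC^1$ into $\cC^0\oplus\cC^2$, and one must argue carefully that the $\cC^2$-contamination either cancels upon averaging or is annihilated by the projection implicit in requiring $V^a\in\cC^0$. Pinning down the exact constant $-3$ and the shift $-4r^2$ requires doing this bookkeeping honestly on a spanning set of $\cC^1$ (e.g. evaluating at the reference point $\xi$ and using $SO(3,1)$-covariance to extend), and I expect that to be the one genuinely delicate calculation; everything else is an application of the Poisson identities already collected in Sections 2–3.
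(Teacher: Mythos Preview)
The paper does not actually prove this lemma: it is introduced with the words ``We start by recalling two results given in \cite{Asano:2021phy}'', and no argument is supplied beyond the statement. So there is no paper proof to compare against; your proposal is an attempt to supply what the paper outsources.

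Your strategy --- show that $T\mapsto\{T,x^a\}_0$ maps $\cC^1$ into divergence-free $\cC^0$ vector fields, then exhibit $\cD^+$ as an explicit left inverse --- is the natural one and is essentially what is carried out in the cited reference. Two points deserve tightening. First, you invoke the brackets \eq{Poisson-brackets-M31}, which are the $\cM^{3,1}$ relations with four-dimensional indices $\mu$; the lemma lives on $H^4$ with five $SO(4,1)$-covariant indices $a=0,\dots,4$, and the computation is cleaner (and the constants come out right) if you stay in that picture throughout. Second, your phrasing ``$\{V^a,x_a\}=-\tfrac13(\Box_H-4r^2)T$ up to terms that vanish on $\cC^1$ after averaging'' is slightly confused: since $V^a\in\cC^0$, the bracket $\{V^a,x_a\}$ already sits exactly in $\cC^1$, so there is nothing to project away. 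What really happens is that $\Box_H T$ receives a contribution from the $\cC^2$ sibling $W^a=\{T,x^a\}_2$ as well, namely $\Box_H T=-\{x_a,V^a\}-[\{x_a,W^a\}]_1$, and the genuine work is to relate $[\{x_a,W^a\}]_1$ back to $T$ using the $\mso(4,1)$ representation structure. You correctly flag this as the delicate step; just be aware that it is not an averaging argument but an intertwiner identity.
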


However, the Hamiltonian vector field $\{T,x^a\}$ generated by the above $T \in \cC^1$
contains  in general also a spin 2 component 
\begin{align}
 V^{(2)a} := \{T,x^a\}_2 = \cD^{++}(V^a) \qquad \in \cC^2 \ 
 \label{vectorfield-C2-component}
\end{align}
 which is also divergence-free
$\eth_a V^{(2)a} = 0$. 
Here $\cD^{++}$ is an $SO(4,1)$ intertwiner given by 
\begin{align}
\cD^{++} (\cA) =  \{\cD^+(\cA),x^a\}_+   
  \label{Dpp}
\end{align}
which satisfies 
\begin{align}
  \int \cD^{++}(\cA^a) \cD^{++}(\cB^a)
  &\approx 4\int\cA^a \cB^a  \ , \label{Dpp-isometry}\\
  \cD^{++}(x^{a}\cA)  &\approx \  x^{a}(\cD^{++}\cA) \ , \nn\\ 
 \cD^{++}(\eth^{a}\cA)  &\approx \  \eth^{a}(\cD^{++}\cA) \ .
 \label{Dpp-deriv-approx}
\end{align}
in the regime where $r^{-2}\Box \gg 1$, i.e. not in the extreme IR regime.
Therefore the above reconstruction of vector fields on
$H^4$ generically leads to extra $\hs$ components  $V^{(2)a} \in \cC^2$, which however encode the same information as $V^a$.
It remains an open question if these can be cancelled by allowing higher-spin corrections
to the coordinate generators $x^a$.

We can use these results to obtain an analogous ``reconstruction'' statement on $\cM^{3,1}$ \cite{Asano:2021phy}:

\begin{lem}
 \label{lemma-reconstruct-M31}
Given any $\cC^0$-valued divergence-free vector field $V^\mu$ on $\cM^{3,1}$,
\begin{align}
 \del_\mu(\r_M V^\mu) = 0
\end{align}
there is a generating function $T \in\cC^1$ such that
\begin{align}
 V^\mu = \{T,x^\mu\}_0 \ .
 \label{T-generate-V}
\end{align}
Explicitly, $T$ is given by
\begin{align}
\boxed{\ 
 T = -3(\Box_H -4r^2)^{-1}\big(\{V^\mu,x_\mu\} + \{V^4,x_4\}   \big)
\ }
\end{align}
where  
\begin{align}
 V^4 = - \frac{1}{x_4} x_\mu V^\mu \ .
\end{align}

\end{lem}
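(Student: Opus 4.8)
The plan is to reduce the statement on $\cM^{3,1}$ to the already-established Euclidean statement on $H^4$ (Lemma \ref{lem:H4-reconstruct}) via the lift–reduction correspondence of Lemma \ref{lemma-div-M31}. First I would take the given $\cC^0$-valued vector field $V^\mu$ on $\cM^{3,1}$ obeying $\del_\mu(\r_M V^\mu)=0$ and lift it to $H^4$ by setting $V^4 := -\tfrac{1}{x_4} x_\mu V^\mu$ as in \eq{V4-def}. By construction this is tangential, $x_a V^a = 0$, and it lies in $\cC^0$ since $V^\mu$, $x_\mu$ and $1/x_4$ are all built from $\cC^0$ generators (here I would want to note that $x_4$ is nowhere vanishing on the relevant patch, i.e.\ away from the Big Bounce $\eta=0$, so the division is legitimate). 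By the converse half of Lemma \ref{lemma-div-M31}, the divergence-free condition $\del_\mu(\r_M V^\mu)=0$ on $\cM^{3,1}$ is equivalent to $\eth_a V^a = 0$ on $H^4$. Thus $V^a$ is a divergence-free tangential $\cC^0$ vector field on $H^4$.

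Next I would apply Lemma \ref{lem:H4-reconstruct}: there is a unique $T = \cD^+(V) = -3(\Box_H - 4r^2)^{-1}\{V^a,x_a\} \in \cC^1$ with $V^a = \{T,x^a\}_0$. Splitting the contraction $\{V^a,x_a\} = \{V^\mu,x_\mu\} + \{V^4,x_4\}$ gives precisely the boxed formula for $T$. It then remains to check that this same $T$ generates $V^\mu$ on $\cM^{3,1}$ in the sense of \eq{T-generate-V}, i.e.\ $V^\mu = \{T,x^\mu\}_0$. But $\{T,x^\mu\}$ is just the Cartesian push-forward of $\{T,x^a\}$ obtained by dropping the $a=4$ component, and the $\cC^0$-projection $[\,\cdot\,]_0$ commutes with this push-forward (both are $S^2$-averaging / lowest-$\hs$-mode operations that act componentwise in the Cartesian frame). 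Hence $\{T,x^\mu\}_0 = \big(\{T,x^a\}_0\big)^\mu = (V^a)^\mu = V^\mu$, as required.

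The main obstacle I anticipate is the commutativity claim between the $\cC^0$-projection and the reduction/lift, together with the consistency check that the $a=4$ component of $\{T,x^a\}_0$ reproduces exactly the chosen lift $V^4 = -\tfrac{1}{x_4}x_\mu V^\mu$ rather than some other tangential completion. The first point follows because the reduction $H^4\to\cM^{3,1}$ is an $SO(3,1)$-equivariant projection in Cartesian coordinates and the $\hs$-grading is defined $SO(3,1)$-covariantly via the $t^\mu$ polynomials of \eq{Cs-explicit}, so it is respected by the projection; the second point follows from tangentiality: $\{T,x^a\}_0$ is automatically tangential on $H^4$ (since $x_a\{x^a,T\}=0$ holds even $\hs$-valued, hence mode by mode), and a tangential vector field is determined by its spatial components $V^\mu$ through exactly the relation \eq{V4-def}. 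I would also flag that uniqueness of $T$ as an element of $\cC^1$ is inherited directly from Lemma \ref{lem:H4-reconstruct}, whereas the generated Hamiltonian vector field $\{T,x^\mu\}$ will in general carry an additional $\cC^2$ piece (the reduction of $V^{(2)a}$ of \eq{vectorfield-C2-component}); this is why the statement is phrased with the projector $[\,\cdot\,]_0$ and why the extra $\hs$ content must be controlled separately, as announced for section \ref{sec:weak-grav}.
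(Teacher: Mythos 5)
Your proposal is correct and follows the same route as the paper: lift $V^\mu$ to the tangential divergence-free $V^a$ on $H^4$ via Lemma \ref{lemma-div-M31}, apply the Euclidean reconstruction of Lemma \ref{lem:H4-reconstruct} to get $T=\cD^+(V)$, split $\{V^a,x_a\}=\{V^\mu,x_\mu\}+\{V^4,x_4\}$, and reduce back. The extra details you supply (the $\cC^0$-projection commuting with the Cartesian push-forward, tangentiality forcing the $a=4$ component to reproduce the chosen lift, and the caveat $x_4\neq 0$) are sensible elaborations of steps the paper leaves implicit.
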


This is simply obtained by lifting $V^\mu$ to a divergence-free vector field
$V^a$ on $H^4$
as in Lemma \ref{lemma-div-M31}. Then the  result \eq{VF-reconstruction-H4} on $H^4$
states that $V^a = \{T,x^a\}_0$ for some $T\in\cC^1$,
which implies $V^\mu = \{T,x^\mu\}_0$.
Moreover, $T$ is uniquely determined by \eq{T-generate-V}.
One can show that this spin 2 component vanishes only for $T \in \mso(4,1)$.

To summarize, we have shown that every divergence-free vector field on $\cM^{3,1}$
can be realized or reconstructed as Hamiltonian vector field, 
i.e. $V^\mu = \{T,x^\mu\}_0$. However, this
entails the presence of a spin two sibling 
$V^{(2)\mu} = \{T,x^\mu\}_2 \ \in \cC^2$. 
In other words, the Hamiltonian vector field generated by $T\in\cC^1$
acts on a  function $\phi = \phi(x)\in\cC^0$ via
\begin{align}
 \{T,\phi\} = \{T,x^\mu\}\del_\mu\phi = (V^\mu + V^{(2)\mu})\del_\mu\phi \ .
\end{align}
Both components of  
$V^\mu + V^{(2)\mu} \in \cC^0 \oplus \cC^2$ are  isomorphic
 as $\mso(4,1)$  modes. 
This applies in particular to the frame
in the effective field theory  on $\cM^{3,1}$ arising from matrix models.

\subsection{Reconstruction of classical geometry}
\label{sec:reconstruction-M31}

Now we apply the results of the previous section to reconstruct a classical frame 
$\tensor{e}{^{\dot a}^\mu}$
within the present framework. This is the basis for describing gravity 
through the effective metric on a suitable covariant quantum spaces.
It is clear from \eq{div-free-frame} that only divergence-free frames can be realized here,
but this does not restrict
the possible metrics as explained in section \ref{sec:generic-4D}.
Hence for any divergence-free classical frame $\tensor{e}{^{\dot a}^\mu}$, there is a
unique $T^{\dot a} \in \cC^1$ given by
\begin{align}
 T^{\dot a} &= -3(\Box_H -4r^2)^{-1}\big(\{\tensor{e}{^{\dot a}^\mu},x_\mu\} 
+ \{\tensor{e}{^{\dot a}^4},x_4\}  \big) \nn\\
 &= -3(\Box_H -4r^2)^{-1}\big(\{\tensor{e}{^{\dot a}^\mu},x_\mu\} 
 - \frac{1}{x_4}\{\tensor{e}{^{\dot a}^\mu} x_\mu,x_4\}  \big)
\end{align}
such that 
\begin{align}
 \tensor{e}{^{\dot a}^\mu} = \{T^{\dot a},x^\mu\}_0 \ .
 \label{frame-reconstruct-0}
\end{align}
E.g.~for the cosmic  frame
$\tensor{e}{^{\dot a}^\mu} = \sinh(\eta) \eta^{\dot a \mu}$ on $\cM^{3,1}$, this gives 
\begin{align}
 \tensor{e}{^{\dot a}^4} &= - \frac{x_\mu}{x_4} \tensor{e}{^{\dot a}^\mu} 
 =  - \frac 1r\, x^{\dot a} 
\end{align}
and we recover the background \eq{T-background-def}
\begin{align}
T^{\dot a} &= -3(\Box_H -4r^2)^{-1}\big(\{\tensor{e}{^{\dot a}^\mu},x_\mu\} 
 - \frac{1}{r}\{x^{\dot a},x_4\} \big)    \nn\\
 &= 6(\Box_H-4r^2)^{-1}\{\sinh(\eta),x^{\dot a}\} \nn\\
 &= t^{\dot a} \ 
\end{align}
using $\Box_H t^\mu = -2 r^2 t^\mu$.
The generator $T^{\dot a}\in\cC^1$ is uniquely determined by \eq{frame-2}.
However, the reconstructed frame will in general contain  higher spin $\hs$ components
$\{T^{\dot a},x^\mu\}_+ \in \cC^2$  due to \eq{poisson-x-decomp}.
Even though these drop out  in the linearized theory upon averaging over $S^2_n$,
this is no longer true in  the non-linear regime, and we must
clarify the importance of these contributions.

\section{Weak gravity regime  and  classical geometry}
\label{sec:weak-grav}

For covariant quantum spaces, the frame is in general higher-spin valued, 
and so is the metric. To describe real physics, we should be able to recover classical geometries in terms of backgrounds which contain no significant 
higher-spin contributions. This leads to the 
following problem: 
For any given classical divergence-free frame $\tensor{e}{_{\dot\a}^\mu} \in\cC^0$,
we would like to find generators $T_{\dot\a}$ 
such that the (generally $\hs$-valued) reconstructed frame
\begin{align}
\tensor{E}{_{\dot\a}^\mu} = \{T_{\dot\a},x^\mu\}  \
 \label{frame-reconstruct-C0-def}
\end{align}
reproduces the classical frame through its spin 0 component  $\cC^0$:
\begin{align}
 [\tensor{E}{_{\dot\a}^\mu}]_0 = [\{T_{\dot\a},x^\mu\}]_0 =   \ \tensor{e}{_{\dot\a}^\mu} \ .
 \label{frame-reconstruct-C0}
\end{align}
This problem of {\bf frame reconstruction} is of central importance,
since gravity requires to realize generic geometries in the matrix model framework. 
A solution of problem is given by the
results in the previous sections
as follows:
\begin{align}
 T_{\dot\a} &= \cD^+(\tensor{e}{_{\dot\a}}) = -3(\Box_H -4r^2)^{-1}\big(\{\tensor{e}{_{\dot\a}^\mu},x_\mu\} 
  + \frac{1}{x_4}\{x_4, x_\mu\tensor{e}{_{\dot\a}^\mu}\}   \big) \qquad \in \cC^1 \ .
  \label{frame-reconstruction}
\end{align}
However in general,
\begin{align}
\tensor{E}{_{\dot\a}^\mu} = \{T_{\dot\a},x^\mu\} 
= \tensor{e}{_{\dot\a}^\mu} + \cD^{++}(\tensor{e}{_{\dot\a}^\mu})
\ \in \ \cC^0 \oplus \cC^2
 \label{frame-reconstruct-C0-C2}
\end{align}
 contains also  $\cC^2$ contributions, which
 vanish only for very special ``pure backgrounds'':

\subsubsection{Pure backgrounds}

Consider the  class of {\bf pure backgrounds} $T_{\dot\a}\in\cC^1$,
which have the property that the frame is a pure function,
\begin{align}
  \tensor{E}{_{\dot\a}^\mu} &=  \{T_{\dot\a},x^\mu\}  \qquad   \in \cC^0 \ .
 \label{pure-BG}
\end{align}
It is not hard to see that all such backgrounds
are some linear combination of $\mso(4,1)$ generators
\begin{align}
 T_{\dot\a} &= \ e_{{\dot\a}; bc} \theta^{bc}\, \qquad\in \cC^1 \ .
\end{align}  
This leads to the $\cC^0$ -valued frame and torsion
\begin{align}  
\tensor{E}{_{\dot\a}^\mu} &=  e_{{\dot\a}; bc} \{\theta^{bc},x^\mu\}
  = -r^2  e_{{\dot\a}; bc}(\eta^{b\mu} x^c - \eta^{c\mu} x^b) \ ,  \nn\\
\tensor{T}{_{\dot\a}_{\dot\b}^\mu} &= - \{\{T_{\dot\a},T_{\dot\b}\},x^\mu\} 
 = r^2 c_{{\dot\a}{\dot\b};ab}(\eta^{a\mu} x^b - \eta^{b\mu} x^a)  \ 
 \label{pure-BG-explicit}
\end{align}
where $\{T_{\dot\a},T_{\dot\b}\} = c_{{\dot\a}{\dot\b};ab}\theta^{ab}$.
These configurations comprise the
cosmic background solution, which is recovered for 
\begin{align}
T_{\dot\a} = t_{\dot\a} =  \frac 1{2R}(\d_c^4 \eta_{{\dot\a} b} - \d_b^4 \eta_{{\dot\a} c}) \theta^{bc}\ .
\end{align}
In particular,  any classical frame of the form
\begin{align}
 \tensor{e}{_{\dot\a}^\mu} = \sinh(\eta) \tilde e_{\dot\a}^{\mu}
\end{align}
with constant $\tilde e_{\dot\a}^{\mu} \ \in\R$ is  reproduced by the pure background
\begin{align}
T_{\dot\a}  &= \tilde e_{\dot\a}^{\mu} t_\mu \ .
 \label{pure-VF-reconstruct-2}
\end{align}
More generally, all such frames $\tensor{E}{_{\dot\a}^\mu}$ \eq{pure-BG-explicit}
are in the kernel of the 
map $\cD^{++}$ \eq{Dpp} which links the components in $\cC^0$ and $\cC^2$,
and they are  closed under $SO(4,1)$ gauge transformations.
They provide  clean
configurations which can be used as a starting point
for a perturbative approach 
around any point $\xi\in\cM^{3,1}$.

\subsubsection{Generic backgrounds and local linearization}

In general, the reconstructed frame $E^{\dot\a} = \{T^{\dot\a},.\}$ will contain components in $\cC^2$, which are obtained from the $\cC^0$ components via $\cD^{++}$.
However as shown above, we can 
reproduce any given classical frame 
at some fixed point $p$
by a pure frame, which is in the 
kernel of $\cD^{++}$. Writing the classical 
frame in the form 
\begin{align}
 e_{\dot\a}^{\mu} = 
 \bar e_{\dot\a}^{\mu} + \d e_{\dot\a}^{\mu}
\end{align}
where $\bar e_{\dot\a}^{\mu}$ is pure and $\d e_{\dot\a}^{\mu}$ vanishes at $p$,
the reconstructed frame is given by
\begin{align}
 E_{\dot\a}^{\mu} = 
 \bar e_{\dot\a}^{\mu} + (1+\cD^{++})\d e_{\dot\a}^{\mu} \ .
\end{align}
Since $\cD^{++}$ is norm-preserving \eq{Dpp-isometry}, this is well approximated
or dominated by its $\cC^0$ component
 in some sufficiently small neighborhood of $p$. We can make this more quantitative
by recalling that the torsion has the structure 
$T \sim e^{-1}\del e$, whose scale is set by the curvature scale of 
gravity since $\cR \sim T^2$.
Therefore the frame is essentially constant in a regions small
compared to the curvature scale of gravity, and  can be approximated 
by its classical spin 0 
component. Similarly, the  metric is then  well approximated 
by its classical component  $[G_{\mu\nu}]_0  \in \cC^0$.
This justifies the above reconstruction procedure for the frame.
Since the torsion is given by derivatives of the frame and 
observing  \eq{Dpp-deriv-approx},
the intertwiner $\cD^{++}$
applies also to the torsion
\begin{align}
 \tensor{T}{_{\dot\a}_{\dot\b}^\mu} 
 \approx (1+\cD^{++})\big[\tensor{T}{_{\dot\a}_{\dot\b}^\mu}\big]_0 \ .
\end{align}
In particular, a gravitational action of the form
\eq{T-T-action} reduces to the classical $\cC^0$ contribution
due to \eq{Dpp-isometry}.
We can therefore consider the geometrical tensors
as effectively classical in the weak gravity regime,
where the curvature scale $\cR \sim T^2$ is much smaller than all other physical 
scales. 

In the strong gravity regime, these  arguments are no longer justfied.
Nevertheless, we will show  in the  next section that one can always choose local normal coordinates at any given point $p\in\cM^{3,1}$
such that all $\hs$ components of the frame  vanish at $p$.
In this sense the metric always reduces to a classical metric,
which governs the local physics near $p$.

\section{Realization of generic $3+1$-dimensional geometries in matrix models}
\label{sec:generic-4D}

Finally, we address the question if any given metric $G_{\mu\nu}$ 
can be realized in terms of a divergence-free frame. 
The first step is to determine the dilaton, which is obtained from 
\eq{rho-metric-general-cov} as
\begin{align}
 \r^2 = \r_M^{-1}\sqrt{|G|} \ .
\end{align}
The next step is to find some classical divergence-free frame $\tensor{e}{^{\dot a}^\mu}$
which gives rise to \eq{metric-gamma}
\begin{align}
\r^2 G^{\mu\nu} = 
 \g^{\mu\nu} =  \eta_{{{\dot a}}{\dot b}} \tensor{e}{^{\dot a}^\mu} \tensor{e}{^{\dot b}^\nu} \ .
 \label{gamma-frame-e}
\end{align}
Without the constraint, there are of course many frames
(in fact a 6-dimensional orbit of $SO(3,1)$) which achieve that.
The 4 divergence constraints are fairly easy to take into account in Cartesian coordinates $x^\mu$:
for any given space-like components 
$\tensor{e}{^{\dot a}^j}$, the time components $\tensor{e}{^{\dot a}^0}$ are  determined   by
\begin{align}
  \del_0(\r_M  \tensor{e}{^{\dot a}^0}) = - \del_j(\r_M \tensor{e}{^{\dot a}^j}) \ .
\label{div-free-rM}
\end{align}
This can be viewed as an ordinary differential equation in $x^0$, which is solved by
 \begin{align}
   \tensor{e}{^{\dot a}^0} = -\r_M^{-1}\,\int\limits_{\xi_0}^{x^0} d\xi\, \del_j(\r_M \tensor{e}{^{\dot a}^j}) \  + \tensor{e}{^{\dot a}^0}(\xi_0)
   \label{frame-0-solve}
 \end{align}
 where the value $\tensor{e}{^{\dot a}^0}(\xi_0)$ at any given time $\xi_0$ can be chosen as desired.
 This means that we  can freely choose the 12 space-like $\tensor{e}{^{\dot a}^j}$, which should allow to reproduce the 10 dof in $\g^{\mu\nu}$ even if the divergence constraint is imposed.

A more systematic, iterative way to determine the frame is as follows:
choose some reference point $\bar x$. After a global $SO(3,1)$ transformation on the frame indices,
we can assume that $\g^{\mu\nu}|_{\bar x} = c\eta^{\mu\nu}$, and we assume $c=1$ for simplicity.
Then choose the diagonal elements as $\tensor{e}{^{\dot a}^a} = \eta^{\dot a a}$, and  off-diagonal frame elements which vanish at $\bar x$, such that the frame reproduces $\g^{\mu\nu}$.
To satisfy the divergence constraint, we define a correction of the diagonal frame elements by
\begin{align}
  \d\tensor{e}{^{\dot a}^a} =  - \r_M^{-1}\,\int\limits_{\bar x^a}^{x^a} d\xi^a\, \sum\limits_{\mu\neq \dot a}\del_\mu(\r_M \tensor{e}{^{\dot a}^\mu}) \ ,
  \label{frame-diag-solve}
\end{align}
which vanishes at $\bar x$. Then the improved frame $\tensor{e}{^{\dot a}^a} \to \tensor{e}{^{\dot a}^a} +  \d\tensor{e}{^{\dot a}^a}$ satisfies the divergence constraint, and reproduces $\g^{\mu\nu}$
to a good approximation near $\bar x$. Now we repeat this procedure iteratively by correcting the
off-diagonal elements of the frame such that $\g^{\mu\nu}$ is reproduced,
and correcting the diagonal elements again with
\eq{frame-diag-solve}, and so on. Since the corrections vanish at $\bar x$,
this procedure will converge to a divergence-free frame
which reproduces $\g^{\mu\nu}$ exactly at least in some neighborhood of $\bar x$.
This could presumably
be proved e.g. using the Banach fixed point theorem, but we leave it as a plausibility argument here
and accept the statement as true.

We conclude that there are always divergence-free frames $\tensor{e}{^{\dot a}^\mu}$
which realize \eq{gamma-frame-e} for any $\g^{\mu\nu}$.
As explained in section \ref{sec:reconstruction-M31}, we can then find a corresponding
matrix background which implements the frame in the weak gravity regime. 
Therefore generic $3+1$-dimensional space-time geometries can indeed
be implemented as backgrounds of the matrix model with an ansatz of the form \eq{general-BG}, leading to a covariant quantum space-time.

Moreover, the above analysis shows that 2 of the 12 dof in $\tensor{e}{^{\dot a}^\mu}$
remain undetermined even if the divergence constraint is imposed.
They can be used to restrict the totally
antisymmetric components 
of the torsion \eq{T-tilde-dual}, which define a vector field via
$\tilde T_\kappa \propto \tensor{T}{^{(AS)}^\nu^\s^\mu}\varepsilon_{\nu\s\mu\kappa}$.
For example, it is plausible that the frame can be chosen such that
\begin{align}
 \tilde T_\mu = \psi^{-1}\del_\mu \tilde\r \
 \label{T-del-onshell}
\end{align}
in terms of an axion $\tilde\r$; this is a consequence of the (semi-classical)
matrix model equations of motion \cite{Fredenhagen:2021bnw}.
This question and its implications should be addressed elsewhere.

\section{Quantization, extra dimensions and induced gravity}

Even though the semi-classical matrix model action defines a
dynamical theory of space-time geometry, it is expected that 
 a (near-) realistic theory of gravity can be obtained only 
from the Einstein-Hilbert action. Remarkably, this arises indeed 
in the 1-loop effective action under certain assumptions,
in the spirit of induced gravity  \cite{Sakharov:1967pk,Visser:2002ew}.
The quantization of the matrix model is defined non-perturbatively through a 
matrix path integral
 \begin{align}
  \cZ = \int dT d\Psi e^{i S} \ .
  \nn
\end{align}
The oscillatory integral becomes absolutely convergent for finite-dimensional $\cH$ upon implementing the regularization 
\begin{align}
 S\to S + i\varepsilon \sum_{\dot\a} Y_{{\dot\a}} Y_{{\dot\a}} \ ,
 \label{Feynman}
\end{align}
which amounts to a Feynman $i\varepsilon$ term in the noncommutative gauge theory.
For recent results of numerical simulations of such models see e.g. 
\cite{Nishimura:2019qal,Anagnostopoulos:2020xai}.

In general, 
the quantization of matrix models on some noncommutative background leads to highly non-local action
due to UV/IR mixing, {\em except} in the maximally supersymmetric IKKT model.
This phenomenon was shown first identified in \cite{Minwalla:1999px}, but it is most transparent in terms of 
string states $|x\rangle\langle y| \in \End(\cH)$, which govern the 
deep quantum (or extreme UV) regime of noncommutative functions
\cite{Steinacker:2022kji,Steinacker:2016nsc}.
These states are also extremely useful to compute the 
1-loop effective action of the IKKT matrix model on generic backgrounds.
It was indeed show in \cite{Steinacker:2021yxt}  that the Einstein-Hilbert action arises at 1 loop,
{\em provided} the transversal 6 matrices $T^{\dot a}$
of the IKKT model assume some
non-trivial background given by some compact fuzzy space:
\begin{align}
 T^{\dot k} \sim t^{\dot k} : \quad\cK \hookrightarrow \R^{6}, \qquad \dot k =4,...,9 \ .
\end{align}
This describes a quantized compact symplectic space
$\cK$ embedded along the transversal directions, which plays
the role of fuzzy extra dimensions. Together with the space-time 
brane $\cM^{3,1}$, the overall background geometry then has a product structure
\begin{align}
 \cM^{3,1} \times \cK  \ \ \hookrightarrow \R^{9,1} \ .
 \label{product-ansatz}
\end{align}
The detailed structure of $\cK$ will be irrelevant\footnote{$\cK$ could be a fuzzy sphere $S^2_N$, or some richer fuzzy space leading to interesting low-energy gauge theories, cf.
\cite{Chatzistavrakidis:2011gs}.};
we only require that the internal 
matrix Laplacian $\Box_\cK = [T^{\dot k},[T_{\dot k},.]]$ has positive spectrum, 
\begin{align}
 \Box_\cK Y_{\L} = m^2_\L\, Y_{\L} \ , \qquad m_{\L}^2 = m_\cK^2 \mu^2_{\L} \
 \label{KK-masses}
\end{align}
with a finite number of (Kaluza-Klein KK) eigenmodes
$Y_{\L} \in \End(\cH_\cK)$ enumerated by some  label $\L$. Here $m_\cK^2$
determines the radius of $\cK$ and
sets the  scale of the KK modes, which will play an important role below.

Computing the 1-loop effective action on such a background then leads in particular to
the following term 
\cite{Steinacker:2021yxt}
\begin{align}
 \Gamma_{\rm 1 loop}^{\cK-\cM}  
 = -\frac{c_{\cK}^2}{(2\pi)^4} \int\limits_{\cM}d^4 x\sqrt{G}\, \r^{-2} m_\cK^2
  \tensor{T}{^\r_\s_{\mu}}\tensor{T}{_{\r}^{\s}_{\nu}} G^{\mu\nu} 
  \label{T-T-action}
\end{align}
which describes the effective interaction between $\cM^{3,1}$ and $\cK$.
Here 
 \begin{align}
 c_{\cK}^2 = \frac{\pi^2}{8}\sum_{\L,s}
 \frac{(2s+1) C^2_{\L}}{\mu_{\L}^2 + \frac{m_s^2}{m^2_\cK}}\  > 0 \
 \label{C2-K-cutoff}
\end{align}
is finite, determined by the dimensionless KK masses $\mu_{\L}$ on $\cK$ \eq{KK-masses}
and their cousins $C^2_{\L}$, which also depend on the  structure of $\cK$.
The mass scale of the internal $\hs$ modes on $S^2_n$ is given by
\begin{align}
 m_s^2 = \frac{s(s-1)}{R^2} \ .
\end{align}
Using partial integration,
one can rewrite the above effective action
in terms of an Einstein-Hilbert term with effective Newton constant
\begin{align}
  \frac{1}{16\pi G_N} = \frac{c^2_{\cK}}{14\pi^4} \r^{-2} m_\cK^2\, .
  \label{Newton-constant-rho-mK}
 \end{align}
 However, this requires  assuming some specific
behavior of $m^2_\cK$ or $G_N$.
If we assume   $G_N = const$, we can use the identity \cite{Steinacker:2021yxt}
  \begin{align}
 \int d^{4}x \frac{\sqrt{|G|}}{G_N}\, \cR
 &= -\int d^{4}x\frac{\sqrt{|G|}}{G_N}\,\Big(
 \frac 78\tensor{T}{^\mu_\s_\r} \tensor{T}{_\mu_{\s'}^\r} G^{\s\s'}
 + \frac 34\tilde T_{\nu} \tilde T_{\mu}  G^{\mu\nu}   \Big) \
 \label{S-EH-T-id-GN-const}
\end{align}
where $\cR$ is the Ricci scalar of the effective metric $G_{\mu\nu}$, and
\begin{align}
 \tilde T_{\mu} dx^\mu = -\star(\frac 12 G_{\nu \s}\tensor{T}{^{\s}_\r_\mu}dx^\nu dx^\r dx^\mu)
 \label{T-tilde-dual}
\end{align}
 is the Hodge-dual
of the totally antisymmetric torsion.
This gives
\begin{align}
\boxed{\
 \Gamma_{\rm 1 loop}^{\cK-\cM}
  = \int\limits_\cM d^4 x  \,\frac{\sqrt{G}}{16\pi G_N}
  \Big(\cR + \frac 34 \tilde T_{\nu} \tilde T_{\mu}  G^{\mu\nu}  \Big) \ .
    \  }
   \label{Gamma-EH-I-2-cov}
\end{align}
  Using the
 eom of the matrix model, $\tilde T_{\nu}$ reduces to a gravitational axion $\tilde\r$  \cite{Fredenhagen:2021bnw}
\begin{align}
 \tilde T_\mu = \r^{-2}\del_\mu\tilde\r \ .
\end{align}
Since $\tilde T_\mu$ vanishes exactly on the cosmic background, it is plausible 
that its effect is small, in which case we  recover the Einstein-Hilbert action
as desired.

However since $G_N$ depends on $\r$ and $m_\cK$, it is not evident that $G_N = const$. If we assume instead that  $m_\cK = const$ (which is reasonable as discussed below),
then one can derive an analogous identity
\begin{align}
 \int d^{4}x\frac{\sqrt{|G|}}{G_N}\, \cR
 &= -\int d^{4}x\frac{\sqrt{|G|}}{G_N}\,\Big(\frac 18 \tensor{T}{^\mu_\s_\r} \tensor{T}{_\mu_{\s'}^\r} G^{\s\s'}
  + \frac{1}{4}\tilde T_{\nu} \tilde T_{\mu}  G^{\mu\nu}   \Big) \
  \label{S-EH-T-id-mK-const}
\end{align}
 based on results in \cite{Fredenhagen:2021bnw}.
This leads to a slightly modified gravitational action
\begin{align}
\boxed{\
 \Gamma_{\rm 1 loop}^{\cK-\cM}
  = 7\int\limits_\cM d^4 x \frac{\sqrt{G}}{16\pi G_N}
  \Big(\cR + \frac 14 \tilde T_{\nu} \tilde T_{\mu}  G^{\mu\nu} \Big) \ }
   \label{Gamma-EH-I-GN-cov}
\end{align}
where the Newton constant is modified by a factor 7.
The precise form of the gravitational action thus depends on the behavior of the compactification
scale $m^2_\cK$, which needs to be clarified in future work.

These results are remarkable in many ways.
The first observation is that the  Newton constant $G_N$ \eq{Newton-constant-rho-mK} is set by
 the compactification scale $m_\cK$.
 This means that the Planck scale is related to the Kaluza-Klein scale for the 
 fuzzy extra dimensions $\cK$.
 Without the fuzzy extra-dimensional $\cK$, no  Einstein-Hilbert action is induced,
and only some (rather obscure) higher-derivative action is obtained.
It should be noted that no UV divergence arises in the loop computation, due to maximal supersymmetry of the matrix model and the
 fact that $\cK$ supports only a finite number of modes.

We can justify the presence of $\cK$ to some extent by studying how the 
1-loop effective action depends on its radius, or equivalently on $m_\cK$.
This is obtained from the same computation as above:
It turns out that  \eq{T-T-action}
\begin{align}
 \Gamma_{\rm 1 loop}^{\cK-\cM} = c^2 m_\cK^2 = - V_{\rm 1 loop}(m_\cK^2) > 0
\end{align}
 is positive for the covariant
FLRW space-time in \cite{Sperling:2019xar}. 
Combined with the bare matrix model action, the effective potential has the structure 
\begin{align}
 V(m_\cK^2) = - c^2 m_\cK^2 + \frac{d^2}{g^2} m_\cK^4
 \label{V-K}
\end{align}
 at weak coupling. This clearly has a minimum for $m_\cK^2>0$ with $V < 0$.
Since $m_\cK$ is essentially the radius of $\cK$,
this strongly suggests that  $\cK$ is  stabilized by quantum effects, thus
providing some justification for \eq{product-ansatz}.

One may worry that the effective potential for $m_\cK$ depends on the
geometry of $\cM^{3,1}$, which we have assumed to be the cosmic background
brane. Thus gravitational deformations of the geometry should have some influence on the Newton constant.
Nevertheless,  $m_\cK$ is expected to be constant to a very good
approximation. Since $m_\cK$ is essentially the radius of $\cK$,
its kinetic term $\int \del^\mu m_\cK \del_\mu m_\cK$ in the matrix model is huge,
which would strongly suppress any local variations;
note that $m^2_\cK \sim \r^2 G_N^{-1}$
is a huge energy   beyond the Planck scale.
Therefore $m_\cK$ should be almost constant, and hence
governed by the large-scale cosmic background
as assumed above.

On the other hand, this suggests that the Newton constant may change
during the cosmic expansion. This may be a significant concern, since there
are rather strong observational bounds on such a variation. Nevertheless, at this
early stage such worries are presumably sub-leading,
and the prime focus should be to gain a more detailed understanding
of this new mechanism for gravity.

Furthermore,
 the above induced gravity {\em action} in 3+1 dimensions can be interpreted as a quasi-local {\em interaction}
of $\cK$ and $\cM$ via 9+1-dimensional IIB supergravity,
recalling that the 1-loop effective action is related to IIB supergravity \cite{Ishibashi:1996xs,Chepelev:1997av,Taylor:1998tv,Steinacker:2016nsc}. This provides additional
confidence into the above rather formal computations, since $9+1$-dimensional
supergravity is well established  in string theory and expected to be recovered
in the matrix model. A more detailed understanding of the relation with supergravity
for backgrounds of the structure $\cM^{3,1 }\times\cK\subset\R^{9,1}$ would be desirable.

Note that in contrast to orthodox string theory, target space $\R^{9,1}$ is not
compactified here.
This makes sense, since the  perturbative physics on such backgrounds is restricted to the brane, and there are no bulk modes radiating off the brane  at weak coupling.
Hence the main problem of  string theory - i.e.
the need for compactification and the
lack of preferred choices thereof - turns into a blessing, as there would be
no induced gravity on space-time without the extra dimensions of target space.

\paragraph{Vacuum energy due to $\cK$.}

The 1-loop contribution to the vacuum energy due to 
$\cK$ is obtained using an analogous trace computation, 
leading to a result of the structure 
\begin{align}
\Gamma_{\rm 1 loop}^\cK
 = \frac{3i}4\Tr\Big(\frac{V_4^\cK}{\Box^4}\Big)
  &\sim -\frac{\pi^2 }{8(2\pi)^4}
  \int\limits_\cM \!\Omega\, \r^{-2} m_\cK^4\sum_{\L s}
  \frac{V_{4,\L}}{\mu^4_{\L}}  
  \label{vacuum-energy-K}
\end{align}
assuming $\frac 1{R^2} \ll m^2_{\L}\sim m_\cK^2$. 
Here $V_{4,\L}$ depends on the structure of $\cK$. This is typically a large vacuum energy with scale set by $m_\cK$ which was related to the Planck scale above,
which could  have either sign.
However as the symplectic volume form  $\Omega$ 
is independent of the metric,
this 1-loop vacuum energy is {\em not} equivalent to a cosmological constant;
its effect on the dilaton $\r$ remains to be understood.
The present framework can therefore be viewed as a realization of induced
gravity in the spirit of Sakharov \cite{Sakharov:1967pk,Visser:2002ew}, 
which is free of UV divergences, and appears to avoid the associated cosmological constant problem.

\subsection*{Acknowledgments}

Useful discussions with Y. Asano, S. Fredenhagen, M. Hanada, V.P. Nair and J. Tekel
are gratefully acknowledged.
The author would  like to thank the organisers of the
Corfu Summer Institute 2021 and the Humboldt Kolleg on “Quantum Gravity and Fundamental
Interactions” for the stimulating meeting and the invitation to deliver a talk.
This work was supported by the Austrian Science Fund (FWF), project P32086.


\begin{thebibliography}{99}


\bibitem{Sperling:2019xar}
M.~Sperling and H.~C.~Steinacker,
``Covariant cosmological quantum space-time, higher-spin and gravity in the IKKT matrix model,''
JHEP \textbf{07} (2019), 010
[arXiv:1901.03522] .




\bibitem{Fredenhagen:2021bnw}
S.~Fredenhagen and H.~C.~Steinacker,
``Exploring the gravity sector of emergent higher-spin gravity: effective action and a solution,''
JHEP \textbf{05} (2021), 183
[arXiv:2101.07297].


\bibitem{Steinacker:2020xph}
H.~C.~Steinacker,
``Higher-spin gravity and torsion on quantized space-time in matrix models,''
JHEP \textbf{04} (2020), 111
[arXiv:2002.02742].



\bibitem{Asano:2021phy}
Y.~Asano and H.~C.~Steinacker,
``Spherically symmetric solutions of higher-spin gravity in the IKKT matrix model,''
[arXiv:2112.08204].


\bibitem{Steinacker:2021yxt}
H.~C.~Steinacker,
``Gravity as a quantum effect on quantum space-time,''
Phys. Lett. B \textbf{827} (2022), 136946
[arXiv:2110.03936].


\bibitem{Ishibashi:1996xs}
N.~Ishibashi, H.~Kawai, Y.~Kitazawa and A.~Tsuchiya,
``A Large N reduced model as superstring,''
Nucl. Phys. B \textbf{498} (1997), 467-491
[arXiv:hep-th/9612115].


\bibitem{Steinacker:2020nva}
H.~C.~Steinacker,
``Quantum (Matrix) Geometry and Quasi-Coherent States,''
J. Phys. A \textbf{54} (2021) no.5, 055401
[arXiv:2009.03400].

\bibitem{Ishiki:2015saa}
G.~Ishiki,
``Matrix Geometry and Coherent States,''
Phys. Rev. D \textbf{92} (2015) no.4, 046009
[arXiv:1503.01230].

\bibitem{Langmann:2001yr}
E.~Langmann and R.~J.~Szabo,
``Teleparallel gravity and dimensional reductions of noncommutative gauge theory,''
Phys. Rev. D \textbf{64} (2001), 104019
[arXiv:hep-th/0105094].

\bibitem{Sperling:2018xrm}
M.~Sperling and H.~C.~Steinacker,
``The fuzzy 4-hyperboloid $H^4_n$ and higher-spin in Yang\textendash{}Mills matrix models,''
Nucl. Phys. B \textbf{941} (2019), 680-743
[arXiv:1806.05907].


\bibitem{Hasebe:2012mz}
K.~Hasebe,
``Non-Compact Hopf Maps and Fuzzy Ultra-Hyperboloids,''
Nucl. Phys. B \textbf{865} (2012), 148-199
[arXiv:1207.1968].

\bibitem{Heckman:2014xha}
J.~Heckman and H.~Verlinde,
``Covariant non-commutative space\textendash{}time,''
Nucl. Phys. B \textbf{894} (2015), 58-74
[arXiv:1401.1810].

\bibitem{Buric:2017yes}
M.~Buric, D.~Latas and L.~Nenadovic,
``Fuzzy de Sitter Space,''
Eur. Phys. J. C \textbf{78} (2018) no.11, 953
[arXiv:1709.05158].

\bibitem{Manolakos:2019fle}
G.~Manolakos, P.~Manousselis and G.~Zoupanos,
``Four-dimensional Gravity on a Covariant Noncommutative Space,''
JHEP \textbf{08} (2020), 001
[arXiv:1902.10922].

\bibitem{Grosse:2010tm}
H.~Grosse, P.~Presnajder and Z.~Wang,
``Quantum Field Theory on quantized Bergman domain,''
J. Math. Phys. \textbf{53} (2012), 013508
[arXiv:1005.5723].


\bibitem{Ramgoolam:2001zx}
S.~Ramgoolam,
``On spherical harmonics for fuzzy spheres in diverse dimensions,''
Nucl. Phys. B \textbf{610} (2001), 461-488
[arXiv:hep-th/0105006].

\bibitem{Medina:2002pc}
J.~Medina and D.~O'Connor,
``Scalar field theory on fuzzy S**4,''
JHEP \textbf{11} (2003), 051
[arXiv:hep-th/0212170].


\bibitem{Abe:2002in}
Y.~Abe and V.~P.~Nair,
``Noncommutative gravity: Fuzzy sphere and others,''
Phys. Rev. D \textbf{68} (2003), 025002
[arXiv:hep-th/0212270].

\bibitem{Hanada:2005vr}
M.~Hanada, H.~Kawai and Y.~Kimura,
``Describing curved spaces by matrices,''
Prog. Theor. Phys. \textbf{114} (2006), 1295-1316
[arXiv:hep-th/0508211].




\bibitem{Steinacker:2019fcb}
H.~C.~Steinacker,
``On the quantum structure of space-time, gravity, and higher spin in matrix models,''
Class. Quant. Grav. \textbf{37} (2020) no.11, 113001
[arXiv:1911.03162].



\bibitem{Steinacker:2022jjv}
H.~Steinacker and T.~Tran,
``A Twistorial Description of the IKKT-Matrix Model,''
[arXiv:2203.05436].


\bibitem{Sakharov:1967pk}
A.~D.~Sakharov,
``Vacuum quantum fluctuations in curved space and the theory of gravitation,''
Dokl. Akad. Nauk Ser. Fiz. \textbf{177} (1967), 70-71

\bibitem{Visser:2002ew}
M.~Visser,
``Sakharov's induced gravity: A Modern perspective,''
Mod. Phys. Lett. A \textbf{17} (2002), 977-992


\bibitem{Nishimura:2019qal}
J.~Nishimura and A.~Tsuchiya,
``Complex Langevin analysis of the space-time structure in the Lorentzian type IIB matrix model,''
JHEP \textbf{06} (2019), 077
S.~W.~Kim, J.~Nishimura and A.~Tsuchiya,
``Expanding (3+1)-dimensional universe from a Lorentzian matrix model for superstring theory in (9+1)-dimensions,''
Phys. Rev. Lett. \textbf{108} (2012), 011601

\bibitem{Anagnostopoulos:2020xai}
K.~N.~Anagnostopoulos, T.~Azuma, Y.~Ito, J.~Nishimura, T.~Okubo and S.~Kovalkov Papadoudis,
``Complex Langevin analysis of the spontaneous breaking of 10D rotational symmetry in the Euclidean IKKT matrix model,''
JHEP \textbf{06} (2020), 069



\bibitem{Minwalla:1999px}
S.~Minwalla, M.~Van Raamsdonk and N.~Seiberg,
``Noncommutative perturbative dynamics,''
JHEP \textbf{02} (2000), 020


\bibitem{Steinacker:2022kji}
H.~C.~Steinacker and J.~Tekel,
``String modes, propagators and loops on fuzzy spaces,''
[arXiv:2203.02376];


\bibitem{Steinacker:2016nsc}
H.~C.~Steinacker,
``String states, loops and effective actions in noncommutative field theory and matrix models,''
Nucl. Phys. B \textbf{910} (2016), 346-373


\bibitem{Chatzistavrakidis:2011gs}
A.~Chatzistavrakidis, H.~Steinacker and G.~Zoupanos,
``Intersecting branes and a standard model realization in matrix models,''
JHEP \textbf{09} (2011), 115;
M.~Sperling and H.~C.~Steinacker,
``Intersecting branes, Higgs sector, and chirality from $ \mathcal{N} $ = 4 SYM with soft SUSY breaking,''
JHEP \textbf{04} (2018), 116
H.~Aoki, J.~Nishimura and A.~Tsuchiya,
``Realizing three generations of the Standard Model fermions in the type IIB matrix model,''
JHEP \textbf{05} (2014), 131



\bibitem{Chepelev:1997av}
I.~Chepelev and A.~A.~Tseytlin,
``Interactions of type IIB D-branes from D instanton matrix model,''
Nucl. Phys. B \textbf{511} (1998), 629-646
[arXiv:hep-th/9705120].

\bibitem{Taylor:1998tv}
W.~Taylor and M.~Van Raamsdonk,
``Supergravity currents and linearized interactions for matrix theory configurations with fermionic backgrounds,''
JHEP \textbf{04} (1999), 013
[arXiv:hep-th/9812239].






\end{thebibliography}
\end{document}